\documentclass[10pt,twocolumn,twoside]{IEEEtran}

\usepackage{graphicx,epsfig}
\usepackage{dblfloatfix}
\usepackage{fixltx2e}
\usepackage{float}

\usepackage{cite}
\usepackage{amsmath}
\usepackage{amsfonts}
\usepackage{enumerate}


\newtheorem{theorem}{Theorem}

\newtheorem{algo}[theorem]{Algorithm}

\newtheorem{lemma}[theorem]{Lemma}

\newtheorem{proposition}[theorem]{Proposition}




\setcounter{page}{1}

\newcounter{proposition}
\setcounter{proposition}{0}


%
%


\begin{document}

\title{Convergence and Density Evolution of a Low-Complexity MIMO Detector based on Forward-Backward Recursion over a Ring}

\author{Seokhyun Yoon,~\IEEEmembership{Member,~IEEE} \\
\thanks{Copyright (c) 2013 IEEE. Personal use of this material is permitted. However, permission to use this material for any other purposes must be obtained from the IEEE by sending a request to pubs-permissions@ieee.org. } 
\thanks{This work was supported by Basic Science Research Program through the National Research Foundation of Korea (NRF) funded by the Ministry of Education, Science and Technology (NRF-2012R1A1A2038807). }
\thanks{S. Yoon is with the Department of Electronics Engineering, Dankook University, Korea (e-mail: syoon@dku.edu)} }

\maketitle \setcounter{page}{1} 
%
%
%

\markboth{\emph{submitted to IEEE Trans. on Vehicular Technology}, 2015}%
{Belief Propagation over Ring-Type pair-wise Graph}

\begin{abstract}
Convergence and density evolution of a low complexity, iterative MIMO detection based on belief propagation (BP) over a ring-type pair-wise graph are presented in this paper. The detection algorithm to be considered is effectively a forward-backward recursion and was originally proposed in [13], where the link level performance and the convergence for Gaussian input were analyzed. Presented here are the convergence proof for discrete alphabet and the density evolution framework for binary input to give an asymptotic performance in terms of average SINR and bit error rate (BER) without channel coding. The BER curve obtained via density evolution shows a good match with simulation results, verifying the effectiveness of the density evolution analysis and the performance of the detection algorithm.
\end{abstract}

\begin{keywords}
MIMO detection, Belief propagation, Density evolution, Pair-wise graphs, Forward-backward recursion.
\end{keywords}



\section{Introduction}

During the last decade, there were lots of works on belief propagation based MIMO detection, in terms of detection in multi-antenna spatial multiplexing or multiuser detection in code-division multiplexing \cite{R01,R02,R03,R04,R05,R06}. In these approaches, the MIMO channel is modeled as a fully-connected factor graph, which consists of a multiple $N$ factor nodes representing the received signal, a multiple $M$ variable nodes representing the hidden data, and the edges connecting the factor nodes with the variable nodes. The resulting graph has maximal edge degree, i.e., every factor node is connected to every variable node. 

In terms of performance, \cite{R05} and \cite{R06} showed that BP asymptotically performs the same as maximum {\it a posteriori} (MAP) detector, if the graphical model is sparse enough. Especially, \cite{R05} showed that BP performs the same as MAP even if the graph is dense while the system load (which, in our context, is the multiplexing order normalized to the number of transmit antenna) is less than a certain limit. 

In terms of complexity, however, the complexity of BP based detection over the fully connected factor graph is as high as MAP detector due mainly to the marginalization operation required for the message update at the factor nodes. To reduce the computational complexity, model simplification approaches have been studied. Especially, in \cite{R07}, it was suggested to prune some edges in the fully connected factor graph, based on the strength of the channel coefficients, i.e., to prune edges corresponding to those variable-factor node pairs with small value of $|h_{jk}|$. By using only $d_f < M$ edges per factor node (i.e., pruning $M-d_f$ edges), the complexity is reduced by a factor of $1/2^{m(M-d_f)}$ relative to MAP of complexity $O(2^{mM})$. The problem of this scheme is that $d_f$ must not be too small to ensure reasonable performance. 

Other interesting graph-based approaches are those in \cite{R08,R09,R10,R11,R12,R13} based on the pair-wise Markov random field (MRF). In MRF, we have nodes representing the hidden data and the edges reflecting the local dependency among them. The local dependency is represented by potential functions and, specifically, in pair-wise MRF they are functions of one or two variables. In fact, as noticed in \cite{R14,R15,R16,R17}, a multivariate Gaussian function can be decomposed into a product of functions of one or two variables resulting in a fully connected pair-wise MRF. Unfortunately, however, BP over the pair-wise MRF based on this potential function does not work well for higher-order modulation, such as 16QAM. 

To overcome such problem, \cite{R12} and \cite{R13} considered using potential functions obtained by a linear transformation, e.g., by a conditional MMSE estimator. Leaving only two variables, one can construct pair-wise graphical model resulting in a low complexity detection algorithm. As a matter of fact, the edge pruning in \cite{R08}, \cite{R09} and \cite{R12,R13} are special case of the channel truncation approach in  \cite{R20}, either in zero-forcing sense or in MMSE sense. Similar to sphere decoding, these detectors are a two stage detector, where the channel is first truncated (pruning edges) to simplify the graph and, then, post-joint detection is performed as a BP over the simplified graphical model.

Gaussian BP, as those in \cite{R16,R17,R18}, can also be considered for low complexity MIMO detection. In Gaussian BP, the input data and messages are all assumed to be Gaussian so that the message and posterior probability (belief) can be represented by a pair of mean and variance, resulting in a very simple message update rule. As shown in \cite{R16} and \cite{R18}, (and also in \cite{R13}), however, the algorithm converges only to the linear minimum mean squared error (LMMSE) solution that is inferior to the MAP detector for non-Gaussian input. 

One lesson from \cite{R13} is that  the BP based MIMO detector over the ring-type pair-wise graph in \cite{R13} is always convergent regardless of its alphabet size. Note that the BP over fully-connected pair-wise model in \cite{R10}, \cite{R11}, and \cite{R13} do not converge, especially when alphabet size is finite and higher than 2. The guarantee of the convergence of the BP over ring-type model in \cite{R13} might come from avoiding short loop. This result is consistent with the results in \cite{R21} and \cite{R22}, where it was shown that BP over a graphical model with a single cycle always converges. According to \cite{R05} and \cite{R06}, as well as more recent simulation results in \cite{R09} and \cite{R13}, however, limiting edge degree and keeping graphical model sparse seems to be a must for successful convergence of BP based detection, especially for use of higher order modulation. 

In this paper, we extend the convergence of the iterative MIMO detector based on the ring-type pair-wise graph to the discrete alphabet. We also develop density evolution framework to characterize the stationary SINR distribution after convergence, which will give us a deep insight into the mechanism and the performance of the algorithm.

The paper is organized as follows. In the next section, the previous works are briefly introduced for the development of the analysis in the subsequent sections. In section III, the convergence proof is provided for discrete alphabet and, in section IV, the density evolution approach is developed for SINR analysis of the MIMO detection algorithm under consideration. Some numerical results are given in section V to validates the density evolution approach for SINR analysis and, finally, the concluding remarks in section VI.

\section{System model and previous works}
\label{Sec:II}

\subsection{System Model}

A Gaussian MIMO system with an $N\times M$ channel matrix ${\pmb H} (N \geq M)$ is modeled as
\begin{align}
{\pmb y} = {\pmb H} {\pmb x} + {\pmb n} = \sum_{k=1}^M {\pmb h}_k x_k + {\pmb n} \label{eq01}
\end{align}
where ${\pmb x}$ is an $M \times 1$ transmitted data symbol vector, ${\pmb n}$ is an $N \times 1$ noise vector, ${\pmb y}$ is an $N \times 1$ received signal vector and ${\pmb h}_m$ is the $m$th column of ${\pmb H}$. The noise vector ${\pmb n}$ is assumed to be complex Gaussian with mean ${\pmb 0}$ and covariance ${\mathbb E}[{\pmb n} {\pmb n}^H] = \sigma^2 {\pmb I}$ and the transmitted data symbol vector ${\pmb x}$ is assumed to have mean ${\pmb 0}$ and covariance matrix ${\mathbb E}[{\pmb x} {\pmb x}^H] = {\pmb I}$, where ${\mathbb E}(\cdot)$ denotes expectation. In practice, each element of ${\pmb x}$ is drawn from a finite alphabet set $\Xi$ of size $2^m$, such as QPSK and 16-QAM, for which $m$ = 2 and 4, respectively.


\begin{figure}[!t]
   \centerline{\resizebox{0.5\columnwidth}{!}{\includegraphics{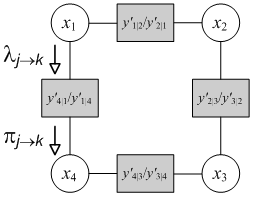}}}
   \caption{The ring-type pair-wise graph for $4 \times N$ MIMO channel}
   \label{Fig01}
\end{figure}

\subsection{Low complexity detection based on BP over ring-type pair-wise graph \cite{R13}}

Our start point is the one in \cite{R13}, especially the BP over ring-type pair-wise graph. The graphical model is shown in Fig.1, over which the BP algorithm is effectively a forward-backward recursion and can be summarized as follows.\newline

\noindent \textsf{BP over ring-type pair-wise graph} 

\noindent \texttt{{\small Given the messages in the previous iteration, $\pi _{(j \mp 1)_M \to j} (x_j )$, they are recursively updated for all $j$ as}}
\begin{align}
\pi_{j \to (j \pm 1)_M} &(x_{(j \pm 1)_M}=s) \nonumber \\ = & \sum_{s'  \in \Xi } 
{\gamma_{(j \pm 1)_M|j}(s |s') \cdot \pi _{(j \mp 1)_M \to j} (x_{j}=s' )}\label{eq02}
\end{align}
\texttt{{\small with $\gamma_{j|{i}}(s |s' )$ given by (\ref{eq09}).\newline  \indent After a pre-defined number of iterations, the belief is finally obtained by}}
\begin{align}
b(x_j ) = \pi _{(j + 1)_M \to j} (x_j ) \cdot \pi _{(j - 1)_M \to j} (x_j ). \label{eq03}
\end{align}
\newline
In (\ref{eq02}) and (\ref{eq03}), $(\cdot)_M$ is one-base modulo-$M$ operation. Since using $(\cdot)_M$ is cumbersome, it will later be omitted.

In this algorithm, we used only factor to variable node message, $\pi _{i \to j} (x_j )$, since there is only two factor nodes connected to a variable node such that variable nodes simply pass the incoming message to the opposite side. 

In (\ref{eq02}), the translation function, $\gamma_{j|{j \pm 1}}(s |s' )$, (also known as branch metric in the context of forward-backward recursion along a trellis) is based on the conditional MMSE estimator of $x_j$ given $x_{j\pm1}$. By defining the conditional estimator of $x_j$ given $x_i$ as 
\begin{align}
{\pmb{w}}_{j|i}^{}  = {\pmb{K}}_{\{ j,i\} }^{ - 1} {\pmb{h}}_j \label{eq04}
\end{align}
and applying it to the received signal vector ${\pmb y}$, we have
\begin{align}
y_{j|i}  = {\pmb{w}}_{j|i}^H {\pmb{y}} = a_{j|i,j} x_j  + a_{j|i,i} x_i  + n_{j|i} \label{eq04a}
\end{align}
where
\begin{align}
{\pmb{K}}_\Phi ^{}  &= \sigma ^2 {\pmb{I}} + \sum\nolimits_{k \notin \Phi }^{} {{\pmb{h}}_k^{} {\pmb{h}}_k^H } \label{eq05} \\
a_{j|i,k}  &= {\pmb{w}}_{j|i}^H {\pmb{h}}_k  = {\pmb{h}}_j^H {\pmb{K}}_{\{ j,i\} }^{ - 1} {\pmb{h}}_k^{}\hspace{8pt} \textrm{for}\; k = i\; \textrm{or}\; j \label{eq06} \\
\mathbb{E}|{n_{j|i}}{|^2} &= {\pmb{w}}_{j|i}^H{{\pmb{K}}_{\{ j,i\} }}{\pmb{c}}_{j|i}^{} = {\pmb{h}}_j^H{\pmb{K}}_{\{ j,i\} }^{ - 1}{\pmb{h}}_j^{} \equiv \sigma _{j|i}^2. \label{eq07}
\end{align}
Note that $\sigma _{j|i}^2 = a_{j|i,j}$  and the parameters from (\ref{eq05}) to (\ref{eq07}) are computed from the channel parameters, $\pmb{H}$ and $\sigma^2$, and the received signal $\pmb{y}$. In the truncated signal model in (\ref{eq04a}), we assume the noise + interference, $n_{j|i}$, to be a Gaussian, from which the translation function is given by
\begin{align}
\gamma_{{j}|i}&(x_{j} |x_i ) \propto  
\mathcal{CN}(y_{j|i} ;a_{j|i,j} x_j  + a_{j|i,i} x_i ,\sigma _{j|i}^2 ) 
\label{eq09}
\end{align}
where $\mathcal{CN}(x;a,b)$ is the complex Gaussian density function with mean $a$ and variance $b$ given by
\begin{align}
\mathcal{CN}(y;\mu ,\sigma^2 ) & \equiv 
\frac{1}{  \pi \sigma^2  }\exp \left( - \frac{|{y} - {\mu}|^2}{\sigma^2}  \right) \nonumber
\end{align}

The translation function given by Gaussian density function in (\ref{eq09}) assumes data symbols other than $x_j$ and $x_i$ are all Gaussian. This assumption, however, is only for pruning edges and to simplify the graphical model as shown in Fig.1 while the messages themselves in (\ref{eq02}) are not necessarily Gaussian too. 

On the other hand, when the input is indeed Gaussian, the forward backward algorithm in (\ref{eq02}) reduces to an update rule for mean and variance \cite{R13} as follows. \newline

\noindent \textsf{Gaussian BP over ring-type pair-wise graph} 

\noindent\texttt{\small Given the messages in the previous iteration, 
$(\mu_{\pi,j \mp1 \rightarrow j},\sigma^{2}_{\pi,j \mp1 \rightarrow j})$ $\forall j$,
they are recursively updat- ed by}
\begin{align}
\sigma _{\pi ,j \to j \pm 1}^2  &= \frac{1}{{1 + \sigma _{j \pm 1|j}^2 }} + \frac{{|a_{j \pm 1|j,j} |^2 }}{{(1 + \sigma _{j \pm 1|j}^2 )^2 }} \cdot \sigma _{\pi ,j \mp 1 \to j}^{ 2} \label{eq10} \\
\mu _{\pi ,j \to j \pm 1}  &= \frac{y_{j \pm 1|j}}{{1 + \sigma _{j \pm 1|j}^2 }}  - \frac{{a_{j \pm 1|j,j} }}{{1 + \sigma _{j \pm 1|j}^2 }} \cdot \mu _{\pi ,j \mp 1 \to j}^{} \label{eq11}
\end{align}
\newline

In \cite{R13}, it is proved that the mean and variance in the Gaussian forward-backward recursion converge respectively to MMSE estimates and its corresponding MMSE, i.e., 
\begin{align}
\mu _{\pi ,j \mp 1 \to j} \rightarrow &\; \hat{x}_j = {\pmb h}_j^H {\pmb K}^{-1} {\pmb y}  \label{eq12} \\
\sigma^{2}_{\pi,j \mp1 \rightarrow j} \rightarrow &\; \text{MMSE}_j = 1- {\pmb h}_j^H {\pmb K}^{-1} {\pmb h}_j. \label{eq13}
\end{align}
as the number of iteration goes to infinity. Since MAP detector becomes linear MMSE estimator for the Gaussian input, it shows the optimality of the scheme for Gaussian input. However, for the non-Gaussian input, MMSE estimator is far inferior to MAP detector.

	In the rest of this paper, we deal with the convergence of the forward-backward algorithm in (\ref{eq02}) for arbitrary discrete alphabet and its density evolution characteristic for binary input, which will give us a deeper insight into its convergence and performance.

\section{Convergence for Discrete Alphabet} \label{Sec:II}

In this section, we provide two convergence proofs of the BP over ring-type pair-wise graph, one for arbitrary discrete alphabet and the other for binary one. The signal model for binary input provides framework not only for the convergence proof but also for density evolution analysis to be discussed in the next section.

\subsection{Convergence proof for arbitrary discrete input } 

The convergence proof in this subsection is based on the 'Perron-Frobenius theorem'. Although it has already been discussed in \cite{R22}, we provide it in our context here for the reader's convenience. 

Suppose that data symbols are drawn from an $2^m$-ary alphabet set $\Xi = \{ s_1, s_2, ..., s_{2^m} \}$. The forward-backward recursion in (\ref{eq02}) can be concisely expressed as
\begin{align}
{\pmb \pi}_{j \to j \pm 1} &= \frac{1}{\alpha_{j\pm1}} {\pmb A}_{j \pm 1|j} {\pmb \pi}_{j\mp 1 \to j} \label{eqa1}
\end{align}
where ${\pmb \pi}_{i \to j}$ is $M \times 1$ message vector of which the $m$th element is given by ${\pi}_{i \to j}(x_j = s_m)$, ${\pmb A}_{j|i}$ is $M \times M$ translation matrix of which the ($m,n$)th element is $\gamma_{j|i}(x_j=s_m|x_i=s_n)$ and $\alpha_j$ is the normalization constant, such that $\| {\pmb \pi}_{i \to j} \|_1 = 1$, i.e., $\alpha_j = \| {\pmb A}_{j|i} {\pmb \pi}_{i \to j} \|_1$.{\footnote{$\| {\pmb a} \|_1$ denotes the $L_1$-norm of a vector ${\pmb a}$}} Note that all element of ${\pmb A}_{j|i}$ are positive real. 

Define ${\pmb F}_j$ and ${\pmb B}_j$ as the translation matrix for one complete turn of forward and backward recursion, respectively, i.e.,
\begin{align}
{\pmb F}_j &=  {\pmb A}_{j|j-1} {\pmb A}_{j-1|j-2} \cdots {\pmb A}_{2|1} {\pmb A}_{1|M} \cdots {\pmb A}_{j+1|j} \label{eqa2} \\
{\pmb B}_j &=  {\pmb A}_{j|j+1} {\pmb A}_{j+1|j+2} \cdots {\pmb A}_{M-1|M} {\pmb A}_{M|1} \cdots {\pmb A}_{j-1|j} \label{eqa3} 
\end{align}
Then, the message vector at the $k$th turn can be expressed as
\begin{align}
{\pmb \pi}_{j-1 \to j}^{(k)} \propto {\pmb F}_j  {\pmb \pi}_{j-1 \to j}^{(k-1)} = {\pmb F}_j^k  {\pmb \pi}_{0}\label{eqa4} \\
{\pmb \pi}_{j+1 \to j}^{(k)} \propto {\pmb B}_j  {\pmb \pi}_{j+1 \to j}^{(k-1)} = {\pmb B}_j^k  {\pmb \pi}_{0}\label{eqa5} \end{align}
where ${\pmb \pi}_{j \pm 1 \to j}^{(k)}$ is the message at the $k$th turn and ${\pmb \pi}_{0}$ is the initial message, which is typically set to uniform distribution. Now, one can prove the following theorem.

\begin{theorem}
With any initial message, ${\pmb \pi}_{0}$, the message ${\pmb \pi}_{j - 1 \to j}^{(k)}$ in (\ref{eqa4}) converges (up to a normalization constant) to the eigenvector of ${\pmb F}_j$ corresponding to its largest eigenvalue. Similarly, ${\pmb \pi}_{j + 1 \to j}^{(k)}$ in (\ref{eqa5}) converges to the eigenvector of ${\pmb B}_j$ corresponding to its largest eigenvalue.
\end{theorem}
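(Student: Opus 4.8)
The plan is to invoke the Perron--Frobenius theorem directly. The key observation is that each translation matrix ${\pmb A}_{j|i}$ has strictly positive entries, since by \eref{eq09} every entry is a value of a complex Gaussian density, which is strictly positive. A product of matrices with strictly positive entries again has strictly positive entries, so both ${\pmb F}_j$ and ${\pmb B}_j$ defined in \eref{eqa2}--\eref{eqa3} are strictly positive square matrices. The Perron--Frobenius theorem for positive matrices then guarantees that ${\pmb F}_j$ has a real, simple, strictly dominant eigenvalue $\lambda_1 > 0$ (the Perron eigenvalue), strictly larger in modulus than every other eigenvalue, with an associated eigenvector ${\pmb v}_1$ that can be taken to have all positive entries.

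First I would write the initial message ${\pmb \pi}_0$ in a basis adapted to ${\pmb F}_j$. If ${\pmb F}_j$ were diagonalizable, write ${\pmb \pi}_0 = c_1 {\pmb v}_1 + \sum_{\ell \geq 2} c_\ell {\pmb v}_\ell$; then ${\pmb F}_j^k {\pmb \pi}_0 = c_1 \lambda_1^k {\pmb v}_1 + \sum_{\ell\geq 2} c_\ell \lambda_\ell^k {\pmb v}_\ell$, and dividing by $\lambda_1^k$ sends every term with $\ell\geq 2$ to zero because $|\lambda_\ell/\lambda_1| < 1$. Hence ${\pmb F}_j^k {\pmb \pi}_0 / \lambda_1^k \to c_1 {\pmb v}_1$, and the normalized iterate ${\pmb F}_j^k{\pmb \pi}_0 / \| {\pmb F}_j^k {\pmb \pi}_0 \|_1$ converges to ${\pmb v}_1 / \|{\pmb v}_1\|_1$, which is exactly what \eref{eqa4} computes up to the running normalization $\alpha_j$. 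Since ${\pmb \pi}_0$ is a probability vector with nonnegative entries and ${\pmb v}_1$ is strictly positive, one checks $c_1 \neq 0$ (e.g. because the left Perron eigenvector is also strictly positive, so its inner product with the nonnegative nonzero ${\pmb \pi}_0$ is positive), so the limit is genuinely the Perron eigenvector and not zero. The argument for ${\pmb \pi}_{j+1\to j}^{(k)}$ and ${\pmb B}_j$ is identical.

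The one technical point that needs care — and the main obstacle to a fully rigorous write-up — is that ${\pmb F}_j$ need not be diagonalizable, so the clean eigenexpansion above is not immediately available. The fix is standard: pass to the Jordan form of ${\pmb F}_j$, or equivalently invoke the general power-iteration convergence result, observing that a Jordan block of eigenvalue $\lambda_\ell$ contributes terms of order $k^{r}\lambda_\ell^{k}$ for fixed $r$, and $k^r |\lambda_\ell|^k / \lambda_1^k \to 0$ still holds because the geometric decay of $(|\lambda_\ell|/\lambda_1)^k$ dominates any fixed polynomial factor. Since the Perron eigenvalue $\lambda_1$ is simple, its own Jordan block is $1\times 1$, so no polynomial factor attaches to the dominant term and the limit ${\pmb F}_j^k{\pmb \pi}_0/\lambda_1^k \to c_1 {\pmb v}_1$ goes through unchanged. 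I would state this as a citation to the Perron--Frobenius theorem (as the paper already does, referencing \cite{R22}) rather than reproving the power-iteration lemma, and simply remark that positivity of ${\pmb A}_{j|i}$, hence of ${\pmb F}_j$ and ${\pmb B}_j$, is what makes the theorem applicable — together with the fact that the running $L_1$-normalization in \eref{eqa1} exactly reproduces the normalized power iteration.
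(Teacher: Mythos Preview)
Your proposal is correct and follows essentially the same Perron--Frobenius/power-iteration argument as the paper: positivity of the ${\pmb A}_{j|i}$ entries makes ${\pmb F}_j$ strictly positive, the Perron eigenvalue is simple and strictly dominant, and the normalized iterates converge to the Perron eigenvector. Your treatment is in fact more careful than the paper's, which tacitly assumes diagonalizability of ${\pmb F}_j$ and does not check that the Perron coefficient $c_1$ is nonzero; your Jordan-block remark and left-eigenvector argument close both of those gaps.
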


\begin{proof}
First, we decomposed the transition matrix for one complete turn, ${\pmb F}_j$, into
\begin{align}
{\pmb F}_j &= {\pmb E}_j {\pmb D}_j {\pmb E}_j^{-1} \nonumber 
\end{align}
where ${\pmb D}_j$ is diagonal matrix with its $m$th diagonal element is the $m$th eigenvalue, $\lambda_m$, and ${\pmb E}_j$ is the eigenbasis, of which the $m$th column, ${\pmb e}_m$, is the eigenvector for $\lambda_m$. Then, we have
\begin{align}
{\pmb F}_j^{k} &= {\pmb E}_j {\pmb D}_j^{k} {\pmb E}_j^{-1} \nonumber \\ 
&= \left[ \lambda_1^k {\pmb e}_1  \lambda_2^k {\pmb e}_2  \cdots \lambda_M^k {\pmb e}_M  \right] \cdot 
\begin{bmatrix}
i_{11} & i_{12} & \cdots & i_{1M} \\   
i_{21} & i_{22} &            & i_{2M} \\   
\vdots &  & \ddots & \vdots \\   
i_{M1} & i_{M2} & \cdots & i_{MM} \\   
\end{bmatrix} \nonumber
\end{align}
where $i_{mn}$ is the ($m,n$)th element of ${\pmb E}_j^{-1}$. Note that, 
with an initial message, ${\pmb \pi}_0$, we have
\begin{align}
\left[{\pmb \pi}^{(k)}_{j-1 \to j} \right]_m &= \left[ {\pmb F}_j^k {\pmb \pi}_0 \right]_m = \sum_{n=1}^{M} \sum_{l=1}^{M} e_{ml} \lambda_l^k i_{ln} \pi_{0,n} \label{eqa6}
\end{align}
Since all the entries of ${\pmb F}_j^{k}$ are positive, we see, from Perron-Frobenius theorem, that the matrix, ${\pmb F}_j$, has single largest real eigenvalue (a.k.a. Perron-Frobenius eigenvalue) of which the corresponding eigenvector has (or can be chosen to have) all positive entries. Let ${\lambda}_{l^*}$ be the largest real eigenvalue and ${\pmb e}_{l^*}$ the corresponding eigenvector. Then, by taking limit $k \to \infty$ to the normalized message $[{\pmb \pi}^{(k)}_{j-1 \to j}]_m$ $ / \| {\pmb \pi}_{j-1 \to j}^{(k)} \|_1$, we have
\begin{align}
\lim_{k \to \infty} \frac{[{\pmb \pi}^{(k)}_{j-1 \to j} ]_m}{\| {\pmb \pi}_{j-1 \to j}^{(k)} \|_1} &\to \frac{ \sum_{n=1}^{M} e_{ml^*} \lambda_{l^*}^k i_{{l^*}n} \pi_{0,n} }{ \sum_{m'=1}^{M} \sum_{n=1}^{M} e_{m'l^*} \lambda_{l^*}^k i_{{l^*}n} \pi_{0,n} } \nonumber \\ 
&= \frac{ e_{ml^*} \lambda_{l^*}^k \sum_{n=1}^{M} i_{{l^*}n} \pi_{0,n} }{ \sum_{m'=1}^{M}  e_{m'l^*} \lambda_{l^*}^k \sum_{n=1}^{M} i_{{l^*}n} \pi_{0,n} } \nonumber \\
&= \frac{ e_{ml^*}  }{ \sum_{m'=1}^{M}  e_{m'l^*} } \nonumber \\
&= \frac{ [{\pmb e}_{l^*}]_m}{\| {\pmb e}_{l^*} \|_1} \label{eqa7}
\end{align}
where, in the r.h.s. of the first line, we took only the term with the largest eigenvalue since other terms are negligible as $k \to \infty$.

The convergence for the backward recursion can be proved similarly.
\end{proof}


Although the convergence proof below is applicable to arbitrary alphabet, it is not suitable for further SINR and BER analysis. So, we provides another signal model which deals directly with the log-likelihood ratio (LLR) by restricting the data to binary. As will be shown later, it gives us a more convenient way for SINR and BER analysis via density evolution. 

\subsection{Message passing for binary input}

For binary input, the message can be summarized by a scalar, i.e., the log likelihood ratio(LLR). Define the message and {\it a priori} LLR as
\begin{align}
l_{i \rightarrow j} &= \log \frac{\pi_{i \rightarrow j}(x_{j} = +1)}{\pi_{i \rightarrow j}(x_{j} = -1)} \label{eq14} \\
l_{a,i} &= \log \frac{p(x_{i} = +1)}{p(x_{i} = -1)} \label{eq15}
\end{align}
such that
\begin{align}
p(x_{i} = \pm 1) &= \frac{e^{\pm l_{a,i}/2}}{e^{+ l_{a,i}/2}+e^{-l_{a,i}/2}} \nonumber
\end{align}
Then, the forward recursion in (\ref{eq02}), together with (\ref{eq14}), can be expressed as (\ref{eq16}) shown on top of the next page, where we assume uniform priors. 

\begin{figure*}[tp]
\begin{align}
l_{i \rightarrow j} &= \log \left( \frac{ p(x_j=+1) \displaystyle\sum_{x_{j-1}=\pm 1} \mathcal{CN}(y'_{j|j-1} ;+a_{j|j-1,j}  + a_{j|j-1,j-1} x_{j-1} ,\sigma _{j|j-1}^2 ) \pi _{j - 2 \to j-1} (x_{j-1} ) }{p(x_j=-1) \displaystyle\sum_{x_{j-1}=\pm 1} \mathcal{CN}(y'_{j|j-1} ;-a_{j|j-1,j}  + a_{j|j-1,j-1} x_{j-1} ,\sigma _{j|j-1}^2 ) \pi _{j - 2 \to j-1} (x_{j-1} )} \right) \nonumber \\
&= l_{a,j} + 
\log \left(  \frac{ \displaystyle\sum_{x_{j-1}=\pm 1} \exp \left( -\frac{|y_{j|j-1}-a_{j|j-1,j}-a_{j|j-1,j-1}x_{j-1}|^2}{\sigma^2_{j|j-1}} + \frac{l_{j-2 \rightarrow j-1}x_{j-1}}{2} \right) }
{ \displaystyle\sum_{x_{j-1}=\pm 1} \exp \left( -\frac{|y_{j|j-1}+a_{j|j-1,j}-a_{j|j-1,j-1}x_{j-1}|^2}{\sigma^2_{j|j-1}} + \frac{l_{j-2 \rightarrow j-1}x_{j-1}}{2} \right) } \right)  \label{eq16} 
\end{align}
\hrulefill
\end{figure*}

Removing common terms in the numerator and denominator in (\ref{eq16}) and defining a function $\zeta(x;c)$ of $x$ with a parameter $c$ as
\begin{align}
\zeta(x;c) = -\log \left( \frac{ e^{x/2-c} + e^{-x/2+c}}{e^{x/2+c} + e^{-x/2-c}} \right) \label{eq17} 
\end{align}
(\ref{eq16}) can be concisely rewritten as
\begin{align}
l_{j-1 \to j} = l_{a,j} + 4y^{(R)}_{j|j-1}-\zeta(l_{j-2 \to j-1} + 2d_{j|j-1} ; c_{j|j-1}  ) \label{eq18} 
\end{align}
\noindent where
\begin{align}
y^{(R)}_{j|i} &= \Re[y_{j|i}] \label{eq19} \\ 
c_{j|i} &= \frac{2}{\sigma^2_{j|i}} \Re[a^{*}_{j|i,j}a_{j|i,i}] = 2\Re[a_{j|i,i}] = 2a^{(R)}_{j|i,i} \label{eq20} 
\end{align}
\begin{align}
d_{j|i} &= \frac{2}{\sigma^2_{j|i}} \Re[a^{*}_{j|i,i}y_{j|i}] \nonumber \\
&= 2\Re[a^{*}_{j|i,i}] x_j + \frac{2|a_{j|i,i}|^2}{\sigma^2_{j|i}} x_i + \frac{2}{\sigma^2_{j|i}} \Re[a^{*}_{j|i,j}n_{j|i}] \nonumber \\
&= 2a^{(R)}_{j|i,i} x_j + \frac{2|a_{j|i,i}|^2}{\sigma^2_{j|i}} x_i \nonumber \\
&\;\;\;\; + \frac{2}{\sigma^2_{j|i}} \left(  a^{(R)}_{j|i,j}n^{(R)}_{j|i} + a^{(I)}_{j|i,j}n^{(I)}_{j|i} \right) \label{eq21}
\end{align}
where we denote the real and imaginary part of a complex variable as superscript ($R$) and ($I$), respectively, for notational simplicity. 



The non-linear function $\zeta(x;c)$ in (\ref{eq17}) is a monotonic function of $x$, either increasing if $c > 0$ or decreasing if $c < 0$, and has the following properties. 
\begin{align}
&(a) \;\; \zeta'(x;c) = \frac{\mathrm{d}}{\mathrm{d}x} \zeta(x;c) \nonumber \\
& \;\;\;\;\;\;\;\;\;\;\;\;\;\;\;\;\; = \frac{1}{2}\tanh \left( \frac{x}{2} +c \right) - \frac{1}{2}\tanh \left( \frac{x}{2} -c \right) \\
&(b) \;\; |\zeta'(x;c)| < 1 \; \forall x \\
&(c) \;\; \lim_{x \to \infty}\zeta(x;c) \to \pm2c \; {\textrm {(saturation)}}. \\
&(d) \;\; \lim_{c \to 0} \frac{\zeta(x;c)}{2c}  \to \tanh \left( \frac{x}{2} \right). 
\end{align}
Note that  
\begin{align}
& \frac{ \zeta( l_{j-2 \to j-1} + 2d_{j|j-1} ; c_{j|j-1} ) }{2c_{j|j-1}} 
\approx \tanh \left( \frac{ l_{j-2 \to j-1} }{2} + d_{j|j-1} \right)  \nonumber
\end{align}
can be regarded as an estimate of $x_{j-1}$ based on the information provided from the current observation, $y_{j|j-1}$, and the message from the previous node, $l_{j-2 \to j-1}$, through the forward-backward recursion. 

\begin{figure}[!t]
  \centerline{\resizebox{1.1\columnwidth}{!}{\includegraphics{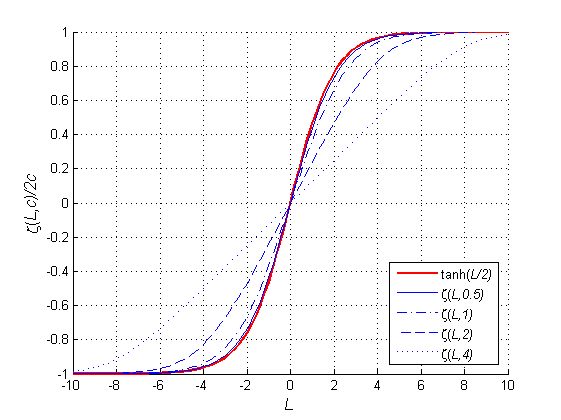}}}
   \caption{Plot of the function $\zeta(x;c)/2c$ with $c$ = 0.5, 1, 2, and 4. With $c < 0.5$, $\zeta(x;c)/2c$ can be approximated to  $-\tanh (x/2)$.}
   \label{Fig02}
\end{figure}

\begin{figure}[!t]
  \centerline{\resizebox{0.7\columnwidth}{!}{\includegraphics{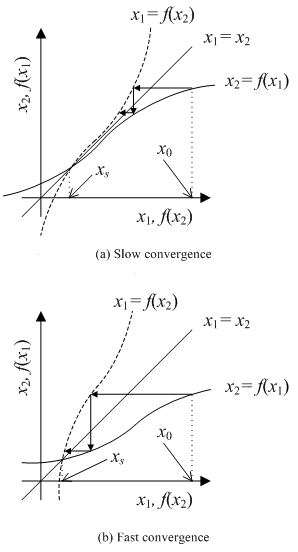}}}
   \caption{Two examples showing the convergence in {\it lemma 1}. The condition $f'(x)|<1$ ensures the convergence to $x_s$, regardless of its convergence speed.}
   \label{Fig02a}
\end{figure}

\subsection{Convergence proof for binary input}

To prove the convergence of the forward-backward recursion in (\ref{eq18}) for binary input, we first prove the following lemma.

\begin{lemma}
Let $f(x)$ be a function with the following two properties
\begin{enumerate}
\item $f(x)$ is a monotonic (either increasing or decreasing) function defined on ($-\infty, \infty$). \label{p01}
\item $|f'(x)| < 1$ $\forall x$. 
\end{enumerate} 
Then, the following properties hold
\begin{enumerate}
\setcounter{enumi}{2}
\item The equation, $f(x)=x$, has a unique solution.
\item Let $x_s$ be the solution of $f(x) = x$. Let $x_k$ for $k = 1, 2, 3 \cdots$ be a sequence obtained by successively applying $f$ starting from an initial value $x_0$, i.e., $x_k = f(x_{k-1})$. Then, for any $x_0$, $x_k$ approaches to $x_s$ as $k \to \infty$.
\item  Let $g(x) = c\cdot f(x-a)+b$ for some real values $a$, $b$ and $-1 \leq c \leq 1$. The properties 1) to 4) also hold for $g(x)$. \\
\end{enumerate} 
\end{lemma}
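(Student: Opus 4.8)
The lemma packages four elementary facts about contraction-like monotone maps; I would prove properties 3)–5) in that order, since each uses the previous one. The key observation throughout is that the hypotheses $|f'(x)|<1$ and monotonicity together say that $h(x) := f(x) - x$ is \emph{strictly} monotone: if $f$ is increasing then $0 \le f'(x) < 1$ so $h'(x) = f'(x) - 1 \in [-1,0)$, hence $h$ is strictly decreasing; if $f$ is decreasing then $-1 < f'(x) \le 0$ so $h'(x) = f'(x)-1 \in (-2,-1]$, again strictly decreasing. Either way $h$ is strictly decreasing on $(-\infty,\infty)$.

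\textbf{Property 3 (existence and uniqueness).} Uniqueness is immediate from strict monotonicity of $h$: $h$ can hit $0$ at most once. For existence I need $h(x) \to +\infty$ as $x \to -\infty$ and $h(x) \to -\infty$ as $x \to +\infty$; then the intermediate value theorem (using continuity of $f$, which follows from differentiability) gives a root. The limits follow because $|f(x) - f(0)| \le |x|$ by the mean value theorem with $|f'|<1$, so $f(x) = O(|x|)$ with slope strictly less than $1$ in absolute value — more precisely, for the decreasing case $h(x) = f(x) - x \le f(0) + |x| - x$, which tends to $-\infty$; a symmetric bound handles $x\to-\infty$. (One should note the bound $|f'|<1$ is pointwise, not uniform, so I would phrase the tail estimate via $h$ being strictly decreasing plus the sandwich $f(0) - |x| \le f(x) \le f(0) + |x|$ to get the sign of $h$ for $|x|$ large; strict decrease then forces the one-sided limits to be $\pm\infty$.)

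\textbf{Property 4 (convergence of iterates).} Let $x_s$ be the fixed point and $x_k = f(x_{k-1})$. By the mean value theorem, $x_{k} - x_s = f(x_{k-1}) - f(x_s) = f'(\xi_k)(x_{k-1}-x_s)$ for some $\xi_k$ between $x_{k-1}$ and $x_s$, so $|x_k - x_s| \le |x_{k-1} - x_s|$; the sequence $d_k := |x_k - x_s|$ is nonincreasing, hence converges to some $d_\infty \ge 0$. To rule out $d_\infty > 0$: the iterates $x_k$ all lie in the compact interval $[x_s - d_0, x_s + d_0]$, on which $|f'|$ attains a maximum $\rho < 1$ (here I genuinely use continuity of $f'$, or at least that $\sup$ over a compact set of a function everywhere $<1$ is $<1$ — which needs $f'$ upper semicontinuous; if $f'$ is only assumed to exist I would instead argue via the graphical "staircase/cobweb" monotonicity picture of Fig.~3). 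Then $d_k \le \rho\, d_{k-1} \le \rho^k d_0 \to 0$. Since this is the intended argument behind Fig.~\ref{Fig02a}, I would present both the clean contraction estimate and the remark that the picture makes the monotone-cobweb convergence visually obvious regardless of rate.

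\textbf{Property 5 (affine pre/post-composition).} Write $g(x) = c f(x-a) + b$ with $|c| \le 1$. Then $g'(x) = c f'(x-a)$, so $|g'(x)| = |c|\,|f'(x-a)| \le |f'(x-a)| < 1$, giving property 2 for $g$; and $g$ is a composition/scaling of a monotone function by an affine map and a constant $c$ of fixed sign, hence monotone (property 1) — with the harmless edge case $c = 0$, where $g \equiv b$ is constant, still satisfying "$g(x) = x$ has a unique solution $x = b$" and trivial convergence, which I would flag explicitly since a constant map is monotone only in the weak sense. Properties 3 and 4 for $g$ then follow by re-invoking the parts just proved. The main obstacle, such as it is, is not difficulty but hygiene: being careful that "monotonic" is meant in the strict sense for the uniqueness argument, that the pointwise bound $|f'|<1$ must be upgraded to a uniform $\rho<1$ on the relevant compact set for the geometric rate (or replaced by the monotone-cobweb argument), and that the degenerate $c=0$ case in property 5 is handled rather than silently assumed away.
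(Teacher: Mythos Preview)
Your argument is correct and follows essentially the same contraction idea as the paper, which shows $|\Delta_k| < |\Delta_{k-1}|$ via $\bigl|\int_{x_s}^{x_s+\Delta_{k-1}} f'(x)\,\mathrm{d}x\bigr| = \int |f'| < \int 1$ using monotonicity to pull the absolute value inside. You are in fact more careful than the paper on two points it glosses over: you actually argue existence in Property~3 (the paper just calls it ``obvious''), and you correctly observe that the strict decrease $|x_k-x_s|<|x_{k-1}-x_s|$ by itself does not force the limit to be zero, patching this with the compactness/uniform-$\rho$ step that the paper omits.
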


\begin{proof}
Property 3) is obvious from 1) and 2). 
Proof of 4) is as follows. Let  $x_{k-1} = x_s + \Delta_{k-1}$ and $x_k = f(x_s+\Delta_{k-1}) = x_s + \Delta_k$. Suppose that $\Delta_{k-1} > 0$. Then, we have
\begin{align}
|\Delta_k| &= \left| \int_{x_s}^{x_s+\Delta_{k-1}} f'(x) {\mathrm d}x \right| \nonumber \\
& \overset{(a)}= \int_{x_s}^{x_s+\Delta_{k-1}} \left| f'(x) \right|{\mathrm d}x  \nonumber \\
& \overset{(b)}< \int_{x_s}^{x_s+\Delta_{k-1}} {\mathrm d}x = \Delta_{k-1} =  |\Delta_{k-1}| \nonumber
\end{align}
where (a) is due to the monotony of $f(x)$, by which $f'(x)$ always has the same sign, and (b) is due to $|f'(x)| < 1 \; \forall x$. Similarly, one also can show the same result, $|\Delta_k| < |\Delta_{k-1}|$, for $\Delta_{k-1} < 0$, which ensures that $x_{\infty} \to x_s$.

In 4), it is obvious that $g(x)$ also has the properties 1) and 2) since shift does not alter the slope and, with $-1\leq c \leq +1$, $|g'(x)| = |cf'(x-a)| \leq 1$. Hence, 3) and 4) also hold for $g(x)$.
\end{proof}

Fig.3(a) and (b) show two examples of the convergence in {\it lemma 2} (Property 4)). From {\it lemma 2}, one can prove the convergence of the forward-backward recursion for binary input as follows.

\begin{theorem}
The forward and backward recursion in (\ref{eq18}) for binary input converges to a unique fixed point as iteration goes to infinity.
\end{theorem}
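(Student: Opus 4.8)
The plan is to read one complete turn of the forward recursion around the ring as a composition of $M$ scalar maps, each of which is an instance of the map $g$ appearing in property 5) of \emph{Lemma~2}, and then to note that the two hypotheses of \emph{Lemma~2} are preserved under composition. The lemma then yields in one stroke the existence and uniqueness of the fixed point and the convergence of the iteration to it.

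First I would take uniform priors, so that $l_{a,j}=0$ and \eref{eq18} reads $l_{j-1\to j}=4y^{(R)}_{j|j-1}-\zeta\!\big(l_{j-2\to j-1}+2d_{j|j-1};c_{j|j-1}\big)$ (a nonuniform prior merely adds the constant $l_{a,j}$ and changes nothing below). Viewed as a map $h_j:l_{j-2\to j-1}\mapsto l_{j-1\to j}$, this is exactly the map $g(x)=c\,f(x-a)+b$ of \emph{Lemma~2}, property 5), with $f(\cdot)=\zeta(\cdot;c_{j|j-1})$, $c=-1$, $a=-2d_{j|j-1}$ and $b=4y^{(R)}_{j|j-1}$. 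Since $\zeta(\cdot;c)$ is monotonic and, by properties (a)--(b), has derivative strictly below $1$ in magnitude everywhere, each $h_j$ is a monotonic function on $(-\infty,\infty)$ with $|h_j'(x)|<1$ for all $x$; that is, each $h_j$ satisfies hypotheses 1)--2) of \emph{Lemma~2}.

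Next I would form the one-turn map $\Phi_j=h_j\circ h_{j-1}\circ\cdots\circ h_{j+1}$, the scalar counterpart of $\pmb{F}_j$ in \eref{eqa2}, so that the message on edge $(j-1\to j)$ after $k$ turns is $l^{(k)}_{j-1\to j}=\Phi_j\big(l^{(k-1)}_{j-1\to j}\big)=\Phi_j^{\,k}\big(l^{(0)}\big)$, in analogy with \eref{eqa4}. A composition of monotonic functions is monotonic, and by the chain rule $|\Phi_j'(x)|=\prod_i|h_i'(\cdot)|<1$ for every $x$, since each factor in the product is $<1$ at every point. Hence $\Phi_j$ itself meets hypotheses 1)--2) of \emph{Lemma~2}, so by its properties 3)--4) the equation $\Phi_j(x)=x$ has a unique solution $l_j^*$ and $l^{(k)}_{j-1\to j}\to l_j^*$ for any initialization. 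Propagating $l_j^*$ once around the ring through the $h_i$'s recovers the fixed points on all remaining edges, so the full system of message equations has a unique fixed point to which every edge message converges. The backward recursion is handled identically with the reverse composition $\Psi_j$ in place of $\Phi_j$, and the belief \eref{eq03}, being a continuous function of its two incoming messages, converges as well.

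The only step beyond routine bookkeeping is the closure of \emph{Lemma~2}'s hypotheses under composition: the lemma as stated covers only the affine reparametrization $g(x)=cf(x-a)+b$, not the $M$-fold composite $\Phi_j$. The point to get right is that the \emph{pointwise} strict bound $|h_i'(x)|<1$ for all $x$ (as opposed to a uniform bound $\rho<1$, which is not available) is precisely what the product of derivatives needs, so that $|\Phi_j'(x)|<1$ holds for all $x$; this is the same subtlety already used in the integral argument that proves property 4). A minor case to dispatch is $c_{j|i}=2a^{(R)}_{j|i,i}=0$, in which some $h_i$ is constant: this only makes $\Phi_j$ more strongly contractive and leaves the conclusion intact.
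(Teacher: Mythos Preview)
Your proposal is correct and follows essentially the same route as the paper: identify each forward step as an affine reparametrization of $\zeta$ covered by property~5) of \emph{Lemma~2}, form the one-turn composite $g_{T,j}$ (your $\Phi_j$), use the chain rule to verify that the composite inherits hypotheses 1)--2), and then invoke properties 3)--4) for convergence to the unique fixed point, with the backward recursion handled symmetrically. Your explicit remarks on the closure of the hypotheses under composition and on the degenerate case $c_{j|i}=0$ go slightly beyond what the paper spells out, but the argument is the same.
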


\begin{proof}
Note that the forward recursion at the $j$th node is of a form 
\begin{align}
l_{j-1 \to j} &= g_j(l_{j-2 \to j-1}) \equiv b_j - \zeta( l_{j-2 \to j-1} - a_j ; c_j ) \nonumber 
\end{align}
for some real values, $a_j$, $b_j$ and $c_j$, where $\zeta(\cdot)$ satisfies the properties 1) to 5) in {\it lemma 2} and, hence, so is $g_j(\cdot)$. Define one iteration as one complete turn of the recursions along the ring, such that the LLR of the $j$th data bit at the $k$th iteration, $l_{j-1 \to j}^{(k)}$, is represented as
\begin{align}
l_{j-1 \to j}^{(k)} &= g_{T,j}(l_{j-1 \to j}^{(k-1)}) \nonumber \\
&= g_j \circ g_{j-1} \circ \dots \circ g_{1} \circ g_{M} \circ \dots g_{j+1}(l_{j-1 \to j}^{(k-1)}) \nonumber 
\end{align}
where, from the chain rule, we have
\begin{align}
\frac{{\mathrm d}g_{T,j}}{{\mathrm d}l} &= \frac{{\mathrm d}g_{j}}{{\mathrm d}g_{j-1}}\cdot
\frac{{\mathrm d}g_{j-1}}{{\mathrm d}g_{j-2}}\cdots \frac{{\mathrm d}g_{1}}{{\mathrm d}g_{M}}\cdot \frac{{\mathrm d}g_{M}}{{\mathrm d}g_{M-1}}\cdots \frac{{\mathrm d}g_{j+1}}{{\mathrm d}l} \nonumber
\end{align}
Since all $g_j(\cdot)$'s satisfy the properties 1) and 2) in {\it lemma 2}, so is $g_{T,j}(\cdot)$. And, hence, from 3) to 5), $l_{j-2 \to j-1}^{(k)}$ converges to a unique fixed point as $k \to \infty$. 

The convergence of the backward recursion can be proved similarly. Note that the two fixed points from the forward and backward recursion do not necessarily equal.
\end{proof}

\section{SINR Analysis via Density Evolution}

In this section, we use the model in Section III-{\it B} to determine the stationary distribution of the LLR, $l_{j-1 \to j}$. Assuming the channel matrix ${\pmb H}$ and the noise power $\sigma^2$ are fixed,
we develop the density evolution of messages between neighboring nodes. In channel coding context, the density evolution in an iterative decoder assumes all-zero sequence is sent and the LLR mean is tracked with the number of iterations, assuming the LLR is Gaussian with its variance being the same as its mean. The density evolution used in \cite{R05} assumes the same, even though the approach differs. 

In this section, we will also assume the LLRs are Gaussian and will track their mean and variance. The differences here from those in iterative channel decoding are that 1) both mean and variance have to be tracked along the ring, where the message of each node has different statistics and, hence, 2) we cannot assume all-zero input since the statistics of the current message, $l_{j-1 \to j}$, depends not only on the background noise but also on the other data. Fortunately, symmetry holds for binary data and the message depends largely on the previous data only so that one can proceed as follows: Under symmetry, we denote the mean and variance of $l_{j-1 \to j}$ as $m_{j|j-1}x_j$ and $v_{j|j-1}$, where both $m_{j|j-1}$ and $v_{j|j-1}$ are non-negative and the mean $m_{j|j-1}x_j$ has the same sign as that of $x_j$. In this definition, $m_{j|j-1}$ can be interpreted as a reliability of $l_{j-1 \to j}$. Then, supposing that $x_j = +1$, we evaluate $(m_{j|j-1},v_{j|j-1})$ for given $(m_{j-1|j-2},v_{j-1|j-2})$.

Assume uniform priors, i.e., $l_{a,j} = 0 \; \forall j$ and suppose that $x_j = +1$. Using (\ref{eq04a}), the forward message passing in (\ref{eq18}) becomes
\begin{align}
l_{j-1 \to j} &= 4y_{j|j-1}^{(R)} - \zeta( l_{j-2 \to j-1} + 2d_{j|j-1} ; 2a_{j|j-1,j-1}^{(R)} )  \nonumber \\ 
&= 4a_{j|j-1,j}^{(R)} + 4n_{j|j-1}^{(R)} + 4a_{j|j-1,j-1}^{(R)}x_{j-1} \nonumber \\
&- \zeta( l_{j-2 \to j-1} + 2d_{j|j-1} ; 2a_{j|j-1,j-1}^{(R)} ) \nonumber \\
&= 4a_{j|j-1,j}^{(R)} + 4n_{j|j-1}^{(R)} + 4a_{j|j-1,j-1}^{(R)}e_{j-1|j-2}(x_{j-1}) \nonumber \\ 
&= 4z_{j|j-1} + 4a_{j|j-1,j-1}^{(R)}e_{j-1|j-2}(x_{j-1}) \label{eq22} 
\end{align}
where, from (\ref{eq19}) to (\ref{eq21}),
\begin{align}
z_{j|j-1} =& a_{j|j-1,j}^{(R)} + n_{j|j-1}^{(R)} \label{eq23} \\
y_{j|j-1}^{(R)} =& a_{j|j-1,j}^{(R)} + n_{j|j-1}^{(R)} + a_{j|j-1,j-1}^{(R)}x_{j-1} \nonumber \\
=& z_{j|j-1} + a_{j|j-1,j-1}^{(R)}x_{j-1} \label{eq24} \\
e_{j-1|j-2}&(x_{j-1}) = x_{j-1} - \frac{\zeta( l_{j-2 \to j-1} + 2d_{j|j-1} ; 2a_{j|j-1,j-1}^{(R)} )}{4a_{j|j-1,j-1}^{(R)}} \label{eq25} 
\end{align}
\begin{align}
d_{j|j-1} =& \frac{2a^{(R)}_{j|j-1,j-1}}{\sigma^2_{j|j-1}} \left( a^{(R)}_{j|j-1,j} + n^{(R)}_{j|j-1}\right)  \nonumber \\
+& \frac{2|a_{j|j-1,j-1}|^2}{\sigma^2_{j|j-1}} x_{j-1} + \frac{2a^{(I)}_{j|j-1,j-1}}{\sigma^2_{j|j-1}} n^{(I)}_{j|j-1} \nonumber \\ 
=& \frac{2a^{(R)}_{j|j-1,j-1}}{\sigma^2_{j|j-1}} z_{j|j-1} + \frac{2|a_{j|j-1,j-1}|^2}{\sigma^2_{j|j-1}} x_{j-1} \nonumber \\
+& \frac{2a^{(I)}_{j|j-1,j-1}}{\sigma^2_{j|j-1}} n^{(I)}_{j|j-1} \label{eq26}
\end{align}
Note that $e_{j-1|j-2}(x_{j-1})$ is the estimation error on $x_{j-1}$ based on the information provided from the current observation, $y_{j|j-1}$ and the previous node message, $l_{j-2 \to j-1}$.

\subsection{Density evolution of the forward-backward recursion}

Here, we determine $(m_{j|j-1},v_{j|j-1})$ for given $(m_{j-1|j-2},v_{j-1|j-2})$ of the previous message, $l_{j-2 \to j-1}$. If the two terms in the last line of (\ref{eq22}) is uncorrelated, the problem is simple. That is, we simply assume the estimation error, $e_{j-1|j-2}(x_{j-1})$, is Gaussian, evaluate its mean and variance and add them to the mean and variance of $z_{j|j-1}$. By iterating such procedure many times, one can obtain the mean and variance of $l_{j-1 \to j}$ (density evolution), even though we need numerical evaluation of integrals. Unfortunately, the two term are correlated due to 1) the inclusion of $z_{j|j-1}$ in the argument of $\zeta(\cdot)$ and 2) the noise + interference in the term $z_{j|j-1}$ in $l_{j-1 \to j}$ and $z_{j-1|j-2}(x_{j-1})$ in $l_{j-2 \to j-1}$ are correlated. (Here, the dependency on $x_{j-1}$ is shown explicitly for $z_{j-1|j-2}$. While, it is not for $z_{j|j-1}$ since we are assuming $x_j = +1$.) One thing that helps make the analysis possible is that they both are well modeled as Gaussian so that the density evolution is numerically tractable by making a few simplifying assumptions.

To this end, we fix $x_{j-1}$ and explore the correlations between the involving variables, of which the randomness solely comes from the noise and other interferences than $x_j$ and $x_{j-1}$. Consider first $z_{j|j-1}$ in the last line of (\ref{eq22}). We have, from (\ref{eq05}) and (\ref{eq06}),
\begin{align}
z_{j|j-1} &\sim {\mathcal N} \left(  a^{(R)}_{j|j-1,j}, {\mathbb E} (n^{(R)}_{j|j-1})^2 \right) \label{eq27}
\end{align}
where, if we assume the suppressed noise + interference, $n_{j|j-1}$, is circularly symmetric, then the variance is given, from (\ref{eq06}) and (\ref{eq07}), by
\begin{align}
{\mathbb E} (n^{(R)}_{j|j-1})^2 &= {\mathbb E} (n^{(I)}_{j|j-1})^2 = \frac{a^{(R)}_{j|j-1,j}}{2}\label{eq28}
\end{align}
This is valid when circularly symmetric constellations, such as QPSK, are used, while it is generally not for non-circularly symmetric real constellations, such as BPSK. Although it has little impact, especially when $M$ is large, it will be certainly more accurate to use the exact variances, which are given in Appendix A.

Now, let us look at the argument of $\zeta(\cdot)$, which can be rewritten as
\begin{align}
l_{j-2 \to j-1}&+ 2d_{j|j-1} =  l_{j-2 \to j-1} + \frac{4a^{(R)}_{j|j-1,j-1}}{\sigma^2_{j|j-1}}  z_{j|j-1}  \nonumber \\
&+ \frac{4|a_{j|j-1,j-1}|^2}{\sigma^2_{j|j-1}} x_{j-1} + \frac{4a^{(I)}_{j|j-1,j-1}}{\sigma^2_{j|j-1}} n^{(I)}_{j|j-1}\label{eq29}
\end{align}
where $l_{j-2 \to j-1}$ is assumed to be Gaussian, of which the mean and variance are provided from the previous node as the pair $(m_{j-1|j-2},v_{j-1|j-2})$, i.e.,
\begin{align}
l_{j-2 \to j-1}  &\sim {\mathcal N} \left(  m_{j-1|j-2}x_{j-1}, v_{j-1|j-2} \right) \label{eq30}
\end{align}
The rest three terms, except for $l_{j-2 \to j-1}$, are assumed to be uncorrelated to each other. Although $z_{j|j-1}$ has a weak correlation with $n^{(I)}_{j|j-1}$  as shown in the Appendix A, we will ignore it for analytical simplicity. Unfortunately, $l_{j-2 \to j-1}$ and $z_{j|j-1}$ has non-negligible covariance. Recalling the definition of $z_{j|j-1}$ and $l_{j-2 \to j-1}$, it stems from the covariance between $n^{(R)}_{j|j-1}$ in $z_{j|j-1}$ and $n^{(R)}_{j-1|j-2}$ in $l_{j-2 \to j-1}$, which is given by

\begin{align}
\sigma_{lz,j} &\equiv {\mathbb E} \left[ l_{j-2 \to j-1} \cdot z_{j|j-1} | x_{j-1}\right] \nonumber \\
&- {\mathbb E} \left[ l_{j-2 \to j-1}  | x_{j-1}\right]\cdot {\mathbb E} \left[ z_{j|j-1} \right] \nonumber \\
&= {\mathbb E} \left[n^{(R)}_{j|j-1} n^{(R)}_{j-1|j-2} \right] + 2a^{(R)}_{j-1|j-2} \cdot {\mathbb E} \left[n^{(R)}_{j|j-1} e_{j-1|j-2} \right] \nonumber \\
&\approx {\mathbb E} \left[n^{(R)}_{j|j-1} n^{(R)}_{j-1|j-2} \right]  \label{eq31}
\end{align}
where we ignored the correlation between $n^{(R)}_{j|j-1}$  and  $e_{j-1|j-2}$. 

Noting that
\begin{align}
{\mathbb E} &\left[n^{(R)}_{j|j-1} n^{(R)}_{j-1|j-2} \right]  +{\mathbb E} \left[n^{(I)}_{j|j-1} n^{(I)}_{j-1|j-2} \right] \nonumber \\
&= \Re \left[n_{j|j-1} n^*_{j-1|j-2} \right] \nonumber \\
&= \Re \left[ {\pmb h}^H_j {\pmb K}^{-1}_{\{j,j-1\}} {\pmb K}_{\{j,j-1,j-2\}} {\pmb K}^{-1}_{\{j-1,j-2\}} {\pmb h}_{j-1}  \right] \label{eq32}
\end{align}
and resorting to the circular symmetry, we have 
\begin{align}
{\mathbb E} \left[n^{(R)}_{j|j-1} n^{(R)}_{j-1|j-2} \right]  ={\mathbb E} \left[n^{(I)}_{j|j-1} n^{(I)}_{j-1|j-2} \right], \nonumber
\end{align}
resulting in
\begin{align}
\sigma_{lz,j} &\approx \frac{1}{2} \Re \left[ {\pmb h}^H_j {\pmb K}^{-1}_{\{j,j-1\}} {\pmb K}_{\{j,j-1,j-2\}} {\pmb K}^{-1}_{\{j-1,j-2\}} {\pmb h}_{j-1}  \right]  \label{eq32}
\end{align}
Without circular symmetry, we also may use similar derivation to ${\mathbb E} ( n^{(R)}_{j|j-1})^2$  as shown in the Appendix A.

	According to the argument below, one can rewrite (\ref{eq29}) using two correlated random variables, say $z_{j|j-1}$ and $w_{j|j-1}$, (conditioned on $x_{j-1}$), i.e.,
\begin{align}
l&_{j-2 \to j-1}+ 2d_{j|j-1} \nonumber \\ 
&\overset{\circ} =  \frac{4a^{(R)}_{j|j-1,j-1}}{\sigma^2_{j|j-1}}  z_{j|j-1}  
+ \frac{4|a_{j|j-1,j-1}|^2}{\sigma^2_{j|j-1}} x_{j-1} + w_{j|j-1} \label{eq33}
\end{align}
where $\overset{\circ} =$ represents equivalence in distribution and
\begin{align}
w_{j|j-1} &\equiv   l_{j-2 \to j-1} +  \frac{4a^{(I)}_{j|j-1,j-1}}{\sigma^2_{j|j-1}} n^{(I)}_{j|j-1} \nonumber \\
& \sim {\mathcal N} \left( m_{j-1|j-2} x_{j-1}, v_{j-1|j-2} + \alpha  \right) \label{eq34}
\end{align}
with
\begin{align}
\alpha = \left( \frac{4a^{(I)}_{j|j-1,j-1}}{\sigma^2_{j|j-1}} \right)^2 {\mathbb E}  \left( n^{(I)}_{j|j-1}\right)^2 \label{eq34}
\end{align}
Recalling the covariance between $z_{j|j-1}$ and $w_{j|j-1}$ given by (\ref{eq31}), the mean and covariance of the Gaussian random pair $(z_{j|j-1}, w_{j|j-1})^T$ are given respectively by
\begin{align}
{\pmb \mu}_{j|j-1} &= \begin{bmatrix}
       a^{(R)}_{j|j-1,j} \\[0.3em]
       m_{j-1|j-2} x_{j-1} 
     \end{bmatrix} \label{eq35} \\
{\pmb C}_{j|j-1} &= \begin{bmatrix}
       {\mathbb E}  \left( n^{(R)}_{j|j-1} \right)^2 & \sigma_{lz,j} \\[0.3em]
       \sigma_{lz,j} & v_{j-1|j-2} + \alpha  
     \end{bmatrix} \label{eq36}
\end{align}
Now, the density evolution can be numerically evaluated by taking average over all possible triple $(x_{j-1}, z_{j|j-1}, w_{j|j-1})$, i.e.,
\begin{align}
m_{j|j-1} &= {\mathbb E}_{x,z,w} \left[ L_{j|j-1} (x,z,w) \right]  \label{eq37} \\
v_{j|j-1} &= {\mathbb E}_{x,z,w} \left[  L_{j|j-1} (x,z,w)^2 \right] - m_{j|j-1}^2  \label{eq38}
\end{align}
where
\begin{align}
L&_{j|j-1} (x,z,w) = 4z + 4a^{(R)}_{j|j-1,j-1} x  \nonumber \\
& -\zeta \left(  \frac{4a^{(I)}_{j|j-1,j-1}}{\sigma^2_{j|j-1}} z + \frac{4|a_{j|j-1,j-1}|^2}{\sigma^2_{j|j-1}} x + w ; a^{(R)}_{j|j-1,j-1} \right) \label{eq39} \\
{\mathbb E}&_{x,z,w} \left[  g(x,z,w) \right] =  \nonumber \\
&\sum_{x=\pm 1} \frac{1}{2} \displaystyle\iint_{-\infty}^{\infty} g(x,z,w) \Phi \left( (w,z)^T; {\pmb \mu}_{j|j-1}, {\pmb C}_{j|j-1}  \right) {\mathrm d}w {\mathrm d}z \label{eq40}
\end{align}
and $\Phi ( (w,z)^T; {\pmb \mu}, {\pmb C}  )$ is bi-variate Gaussian density function with mean ${\pmb \mu}$ and covariance matrix ${\pmb C}$.

\subsection{SINR of the final belief}

The final belief is given by ${\hat l}_j = l_{j-1 \to j} + l_{j+1 \to j}$, for which the mean and variance are given by
\begin{align}
{\mathbb E}[  {\hat l}_j ] &= m_{j|j-1} + m_{j|j+1} \label{eq41} \\
{\mathbb V \rm{ar}}[  {\hat l}_j ] &= v_{j|j-1} + v_{j|j+1} + 2{\mathbb E}[ l_{j-1 \to j} l_{j+1 \to j} ] \nonumber \\
&-  2m_{j|j-1} m_{j|j+1} \label{eq42}
\end{align}
Since the two terms $z_{j|j \pm 1}$ and $e_{j\pm 1|j \pm 2}$ in $l_{j \pm 1 \to j}$ have correlation to each other, the computation of ${\mathbb E}[ l_{j-1 \to j} l_{j+1 \to j} ]$ in (\ref{eq42}) is little bit tricky. However, noting that $e_{j+1|j+2}$ and $e_{j-1|j-2}$ are conditionally independent for given $z_{j|j + 1}$ and $z_{j|j - 1}$, it can be obtained as follows.
\begin{align}
{\mathbb E}[ l_{j-1 \to j} l_{j+1 \to j} ] &=  \\
\iint {\mathbb E}[ l_{j-1 \to j} |z_f] & \cdot {\mathbb E}[  l_{j+1 \to j} |z_b] 
\Phi \left( (z_f,z_b)^T; {\pmb \mu}_{z,j}, {\pmb C}_{zz,j}  \right){ \mathrm d} z_f {\mathrm d} z_b \label{eq43}
\end{align}
where
\begin{align}
{\pmb \mu}_{z,j} &= \begin{bmatrix}
       a^{(R)}_{j|j-1,j} \\[0.3em]
       a^{(R)}_{j|j+1,j} 
     \end{bmatrix} \label{eq44} \\
{\pmb C}_{zz,j} &= \begin{bmatrix}
       {\mathbb E}  \left( n^{(R)}_{j|j-1} \right)^2 & \sigma_{zz,j} \\[0.3em]
       \sigma_{zz,j} & {\mathbb E}  \left( n^{(R)}_{j|j+1} \right)^2  
     \end{bmatrix} \label{eq45} \\
\sigma_{zz,j} &\approx \frac{1}{2} \Re \left[ {\pmb h}^H_j {\pmb K}^{-1}_{\{j,j-1\}} {\pmb K}_{\{j+1,j,j-1\}} {\pmb K}^{-1}_{\{j,j+1\}} {\pmb h}_{j}  \right] \label{eq46}
\end{align}
\begin{align}
{\mathbb E}[ l_{j\pm1 \to j} |z] &= {\mathbb E}_{w,x} \left( L_{j|j\pm 1} (x,z,w) \right) \label{eq47}
\end{align}
with $L_{j|j\pm 1} (x,z,w)$ given by (\ref{eq39}). Note that in (\ref{eq46}) we assumed circular symmetry as in (\ref{eq32}) and (\ref{eq47}) can be obtained similarly to (\ref{eq37}). Using these, the SINR for the $j$th data symbol is given by
\begin{align}
\gamma_j &= \frac{\left( {\mathbb E}[  {\hat l}_j ] \right)^2 }{{\mathbb V \rm{ar}}[  {\hat l}_j ]} \label{eq48}
\end{align}

\begin{figure*}[!tH]
  \centerline{\resizebox{1.7\columnwidth}{!}{\includegraphics{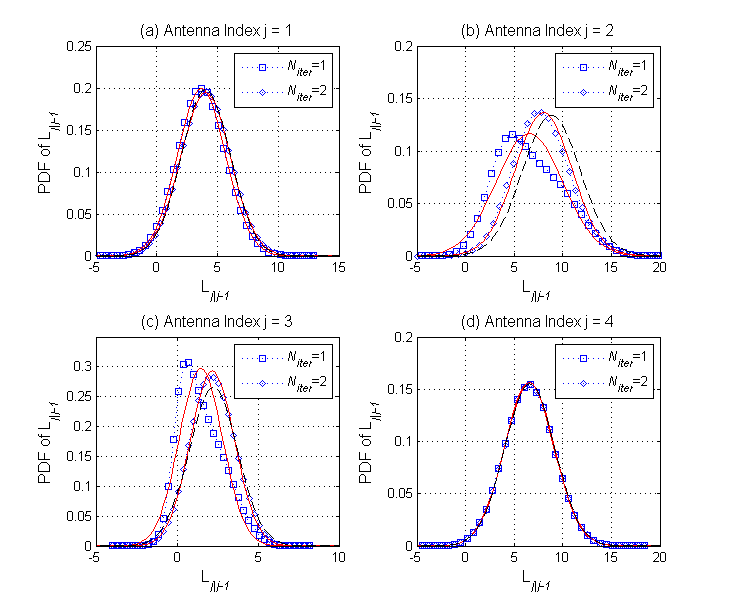}}}
\caption{Density Evolution Example for $N = M = 4$. Dotted with marks: Measured from simulation result, Solid line: Estimated via DE, Dashed: Distribution under perfect cancellation. After two iterations, no changes in LLR density have been observed in all antennas. SNR = $1/\sigma^2$ = 6dB.}
   \label{Fig04}
\end{figure*}

It will also be quite interesting to consider an upper bound on SINR, which is easy to obtain while gives a quite tight bound on bit error rate. Ignoring the estimation error on the previous/next variable, the upper bound is given by
\begin{align}
\gamma_j &\leq \gamma_{bound,j} = \frac{\left( a^{(R)}_{j|j-1,j} + a^{(R)}_{j|j+1,j} \right)^2}{ {\mathbb E}  \left( n^{(R)}_{j|j-1} \right)^2 + {\mathbb E}  \left( n^{(R)}_{j|j+1} \right)^2 + 2\sigma_{zz,j}} \label{eq49}
\end{align}

\section{Numerical Results}

\subsection{Density Evolution Example}

First, we test how well the analysis work for a fixed channel matrix in micro scopic point of view. The channel matrix used is shown below, which was obtained by a random generation and rounding each element below one tenth.
\begin{align}
&{\pmb H}_{ex} = \nonumber \\ 
&\begin{bmatrix}
       -0.1-0.1j & -0.5 & -0.4-0.1j & -0.2+0.8j  \\[0.3em]
       +0.2-0.7j & -0.2+0.2j & -0.1+0.2j & -0.1-0.1j  \\[0.3em]
       -0.1-0.1j & +0.2+0.8j & -0.4-0.2j & 0.4j  \\[0.3em]
       +0.1-0.4j & -0.4+0.2j & +0.2+0.5j & +0.2-0.3j  
     \end{bmatrix} \nonumber
\end{align}
Using this channel, we generate pairs of random binary data and Gaussian noise vector many times. With each pair, we apply the forward-backward recursion to obtain the LLR’s and measured its empirical density. When applying the algorithm, we performed the message passing in parallel fashion, i.e., all the messages, $l_{j-1 \to j} \; \forall j$, are initialized to zero and the message passing begins at the same time for all the nodes. Although it is typical to apply forward and backward recursion sequentially, we applied it in parallel here so that we can observe the density evolution for all the nodes at the same pace. 

The results are shown in Fig.4 (a) to (d), respectively, for each node, where we plotted the density of the forward message for the first two iterations. The dotted lines with marks are the measured density, the solid lines are estimated via density evolution and the dashed lines are the density with perfect cancellation, i.e., $e_{j|j-1}(x_j) = 0 \; \forall j$ such that the SINR is given by (\ref{eq49}). In Fig.4, it can be seen that the messages produced by node 1 and 4 converge almost at the first iteration and we cannot see changing densities with more iteration. On the other hand, one can see explicit density evolution for node 2 and 3, where the densities do not look like Gaussian at the first iteration, while they become more like Gaussian after the second message update, after which we could not observe any further changes (and, hence, we did not plot them). Note also that the densities after convergence show no or a little degradation from those with perfect cancellation of the companion signal. This suggests us that the performance based on the density under the perfect cancellation might provide a tight upper bound on various performance measures.

\begin{figure}[!t]
  \centerline{\resizebox{1.1\columnwidth}{!}{\includegraphics{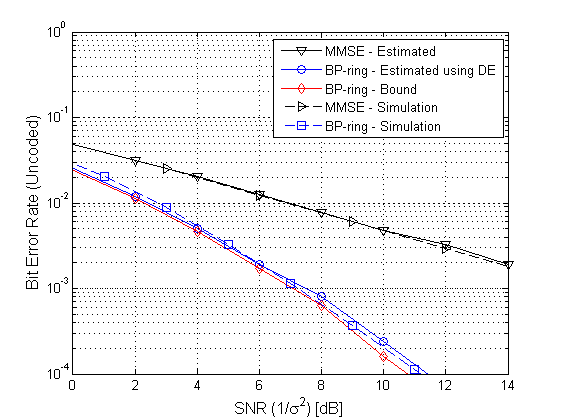}}}
   \caption{Uncoded bit error rate performance of MMSE receiver and BP-Ring in \cite{R13}. Solid line: Estimated, Dashed line: Simulation results.}
   \label{Fig05}
\end{figure}

\subsection{BER performance via Density evolution}

To show the validity of the density evolution approach, we compare the bit error rate (BER) curves estimated using the density evolution with that obtained by simulation results without channel coding. To this end, we generated 1600 random MIMO channel matrices in the same way as did in \cite{R13} and applied density evolution to obtain the SINR of the final beliefs for each node. Using these SINRs, the BER and its bound are estimated by

\begin{align}
P_E &= {\mathbb E}_{\pmb H} \left[  \frac{1}{M} \sum_{j=1}^{M} Q \left( \sqrt{\gamma_j}  \right) \right] 
\geq {\mathbb E}_{\pmb H} \left[  \frac{1}{M} \sum_{j=1}^{M} Q \left( \sqrt{\gamma_{j,{\textrm bound}}}  \right) \right] \label{eq48}
\end{align}
where ${\mathbb E}_{\pmb H}[\cdot]$  is the average over all channel matrices generated. We set the number of transmit and receive antenna to 4 and the number of inner iterations to 2. Fig.5 compares the BER curves obtained via density evolution and its lower bound obtained by the SINR bound in (\ref{eq49}) with that obtained by simulation. The simulation results are the same as those in \cite{R13} without channel coding. The figure shows that (a) the BER obtained via density evolution matches perfectly to the simulation results and (b) the BER bound is quite tight showing the estimation error from the previous/next nodes has negligible effect especially for binary input. Fig.6 shows the SINR averaged over the same set of random channels, where one can see approximately 1.7 dB SINR gain over the linear MMSE receivers at low SNR region (at around 2dB SNR), which is quite good match to the result in \cite{R30}.

\begin{figure}[!t]
  \centerline{\resizebox{1.1\columnwidth}{!}{\includegraphics{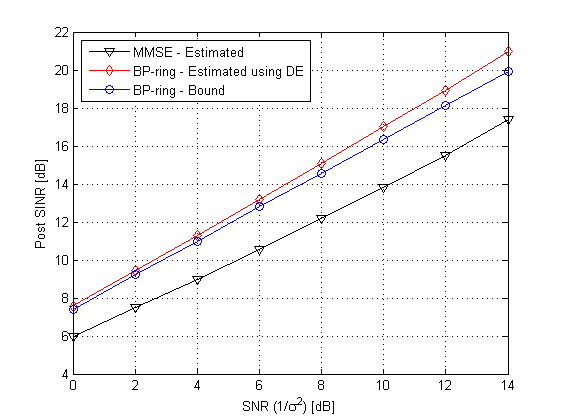}}}
   \caption{Average SINR of MMSE receiver and BP-Ring in \cite{R13}.}
   \label{Fig06}
\end{figure}

\begin{figure*}
\begin{align}
{\mathbb E}  \left( n^{(R)}_{j|j-1}n^{(I)}_{j|j-1} \right) &= 
{\mathbb E}  \left( \Re \left[{{\pmb w}_j^{H}}({\pmb H}_{\{j,j-1\}} {\pmb x}_{\{j,j-1\}}+ {\pmb n}) \right] \Im \left[{{\pmb w}_j^{H}} ({\pmb H}_{\{j,j-1\}} {\pmb x}_{\{j,j-1\}}+ {\pmb n}) \right]\right) \nonumber \\
&= {{\pmb w}_j^{(R)}}^T \left( {\pmb H}^{(R)}_{\{j,j-1\}} {{{\pmb H}^{(I)}}^T_{\{j,j-1\}}} \right) {{\pmb w}_j^{(R)}} 
- {{\pmb w}_j^{(R)}}^T \left( {\pmb H}^{(R)}_{\{j,j-1\}} {{{\pmb H}^{(R)}}^T_{\{j,j-1\}}}  + \frac{\sigma^2}{2} {\pmb I} \right) {{\pmb w}_j^{(I)}}\nonumber \\
&+ {{\pmb w}_j^{(I)}}^T \left( {\pmb H}^{(I)}_{\{j,j-1\}} {{{\pmb H}^{(R)}}^T_{\{j,j-1\}}}  \right) {{\pmb w}_j^{(I)}} 
- {{\pmb w}_j^{(I)}}^T \left( {\pmb H}^{(I)}_{\{j,j-1\}} {{{\pmb H}^{(I)}}^T_{\{j,j-1\}}} + \frac{\sigma^2}{2} {\pmb I} \right) {{\pmb w}_j^{(R)}} \nonumber
\end{align}
\hrulefill
\end{figure*}

\section{Concluding Remarks}

In this paper, we considered the convergence and density evolution of a low complexity MIMO detection based on belief propagation over ring-type pair-wise graph. The algorithm was originally proposed in \cite{R13}, where utilizing the ring-type pair-wise graph the belief propagation algorithm could be expressed as a forward backward recursion and the link level performance and the convergence for Gaussian input have been analyzed. In this paper, we extended the convergence analysis to discrete alphabet. Specifically, we proved the convergence of the forward-backward recursion and devised a density evolution approach to provide an asymptotic performance in terms of average bit error rate and SINR. The BER curves shows perfect match with simulation results provided in \cite{R13}, which validates the density evolution approach for binary input and the performance improvements of the algorithm in \cite{R13} over the linear MMSE receiver.

\begin{appendix}[Statistics of $n^{(R)}_{j|j\pm 1}$ and $n^{(I)}_{j|j\pm1}$ ]

Without loss of generality, we consider only $n^{(R)}_{j|j-1}$ and $n^{(I)}_{j|j-1}$. First, from the definition of $n_{j|j-1}$ 
, we obtain 

\begin{align}
{\mathbb E} \left( n^{(R)}_{j|j-1} \right)^2 &= {\mathbb E}  \left( \Re \left[{{\pmb w}_j^{H}} \cdot ({\pmb H}_{\{j,j-1\}} {\pmb x}_{\{j,j-1\}} {\pmb n}) \right] \right)^2 \nonumber \\
&= \; {{\pmb w}_j^{(R)}}^T \left( {\pmb H}^{(R)}_{\{j,j-1\}} {{{\pmb H}^{(R)}}^T_{\{j,j-1\}}} + \frac{\sigma^2}{2} {\pmb I} \right) {{\pmb w}_j^{(R)}} \nonumber \\
&+ {{\pmb w}_j^{(R)}}^T \left( {\pmb H}^{(R)}_{\{j,j-1\}} {{{\pmb H}^{(I)}}^T_{\{j,j-1\}}}  \right) {{\pmb w}_j^{(I)}} \nonumber \\
&+ {{\pmb w}_j^{(I)}}^T \left( {\pmb H}^{(I)}_{\{j,j-1\}} {{{\pmb H}^{(R)}}^T_{\{j,j-1\}}}  \right) {{\pmb w}_j^{(R)}} \nonumber \\
&+ {{\pmb w}_j^{(I)}}^T \left( {\pmb H}^{(I)}_{\{j,j-1\}} {{{\pmb H}^{(I)}}^T_{\{j,j-1\}}} + \frac{\sigma^2}{2} {\pmb I} \right) {{\pmb w}_j^{(I)}} \nonumber 
\end{align}
where we used ${\mathbb E}[{\pmb {xx}}^H] = {\pmb I}$ and
\begin{align}
{\mathbb E} \left[ {\pmb n}^{(R)}{{\pmb n}^{(R)}}^T \right] &= {\mathbb E} \left[ {\pmb n}^{(R)}{{\pmb n}^{(R)}}^T \right] = \frac{\sigma^2}{2} {\pmb I} \nonumber \\
{\mathbb E} \left[ {\pmb n}^{(R)}{{\pmb n}^{(I)}}^T \right] &= {\pmb 0} \nonumber
\end{align}
Note that, from independence among columns of $\pmb H$ and between the real and imaginary parts, ${\pmb w}_j^{(R)}$ and ${\pmb w}_j^{(I)}$ are quasi-orthogonal (i.e., ${{\pmb w}_j^{(R)}}^T{\pmb w}_j^{(I)} \approx {\pmb 0}$) and, hence, the covariance is weak especially when $N$ is large. And, if this is the case, we can approximate
\begin{align}
{\mathbb E} \left( n^{(R)}_{j|j-1} \right)^2 &\approx {{\pmb w}_j^{(R)}}^T \left( {\pmb H}^{(R)}_{\{j,j-1\}} {{{\pmb H}^{(R)}}^T_{\{j,j-1\}}} + \frac{\sigma^2}{2} {\pmb I} \right) {{\pmb w}_j^{(R)}} \nonumber \\
&+ {{\pmb w}_j^{(I)}}^T \left( {\pmb H}^{(I)}_{\{j,j-1\}} {{{\pmb H}^{(I)}}^T_{\{j,j-1\}}} + \frac{\sigma^2}{2} {\pmb I} \right) {{\pmb w}_j^{(I)}} \nonumber
\end{align}
Similarly, we obtain
\begin{align}
{\mathbb E} \left( n^{(I)}_{j|j-1} \right)^2 &\approx {{\pmb w}_j^{(R)}}^T \left( {\pmb H}^{(I)}_{\{j,j-1\}} {{{\pmb H}^{(I)}}^T_{\{j,j-1\}}} + \frac{\sigma^2}{2} {\pmb I} \right) {{\pmb w}_j^{(R)}} \nonumber \\
&+ {{\pmb w}_j^{(I)}}^T \left( {\pmb H}^{(R)}_{\{j,j-1\}} {{{\pmb H}^{(R)}}^T_{\{j,j-1\}}} + \frac{\sigma^2}{2} {\pmb I} \right) {{\pmb w}_j^{(I)}} \nonumber
\end{align}

The covariance between $n^{(R)}_{j|j-1}$ and $n^{(I)}_{j|j-1}$ can also be obtained similar way, as shown on top of this page.
From independence among columns of $\pmb H$ and between the real part and imaginary part, ${{\pmb w}_j^{(R)}}^T{\pmb w}_j^{(I)} \approx {\pmb 0}$ and ${{\pmb H}_{j,j-1}^{(R)}}{{\pmb H}_{j,j-1}^{(I)}}^T \approx {\pmb 0}$, so that the covariance is weak especially when $N$ is large.

\end{appendix}

\renewcommand{\baselinestretch}{1.0}
\bibliographystyle{IEEEbib}

\bibliography{references_combined}

\end{document}